\def\BibTeX{{\rm B\kern-.05em{\sc i\kern-.025em b}\kern-.08em
    T\kern-.1667em\lower.7ex\hbox{E}\kern-.125emX}}
\newtheorem{theorem}{Theorem}
\newtheorem{definition}{Definition}
\newtheorem{lemma}{Lemma}
\newtheorem{remark}{Remark}
\newtheorem{assumption}{Assumption}
\DeclareFontFamily{OT1}{pzc}{}
\DeclareFontShape{OT1}{pzc}{m}{it}{<-> s * [1.200] pzcmi7t}{}
\DeclareMathAlphabet{\mathpzc}{OT1}{pzc}{m}{it}
\newcommand*\mcapinn[2]{\vcenter{\hbox{$\mathsurround=0pt
  \ifx\displaystyle#1\textstyle\else#1\fi\bigcap$}}}
\newcommand*\mcupinn[2]{\vcenter{\hbox{$\mathsurround=0pt
  \ifx\displaystyle#1\textstyle\else#1\fi\bigcup$}}}
\def\begequarr{\begin{eqnarray}}
\def\endequarr{\end{eqnarray}}
\def\begequarrs{\begin{eqnarray*}}
\def\endequarrs{\end{eqnarray*}}
\def\begequ{\begin{equation}}
\def\endequ{\end{equation}}
\def\begequs{\begin{equation*}}
\def\endequs{\end{equation*}}
\def\begite{\begin{itemize}}
\def\endite{\end{itemize}}
\def\begcen{\begin{center}}
\def\endcen{\end{center}}
\def\begrem{\begin{remark}\rm}
\def\endrem{\end{remark}}
\def\ba{\begin{array}}
\def\ea{\end{array}}
\def\diag{\textnormal{diag}}
\newcommand{\vb}{ {\mathbf{v}}}
\newcommand{\bb}{\mathbf{b}}
\newcommand{\xb}{\mathbf{x}}
\newcommand{\yb}{\mathbf{y}}
\newcommand{\zb}{\mathbf{z}}
\newcommand{\Hb}{\mathbf{H}}
\newcommand{\Ib}{\mathbf{I}}
\newcommand{\Ab}{\mathbf{A}}
\newcommand{\Cb}{\mathbf{C}}
\newcommand{\eb}{\mathbf{e}}
\newcommand{\Sb}{\mathbf{S}}
\newcommand{\psib}{\bm{\psi}}
\newcommand{\zetab}{\bm{\zeta}}
\def\beeq#1{\begin{equation}{#1}\end{equation}}
\begin{document}

\title{Distributed Solvers for Network Linear
Equations \\ with Scalarized Compression} 

\author{\IEEEauthorblockN{Lei Wang, Zihao Ren, Deming Yuan, Guodong Shi}

 \thanks{This work  was supported in
part by Zhejiang Provincial Natural Science Foundation of China under Grant No. LZ23F030008 and in part by the National Natural Science Foundation of China under Grant No. 62203386.  }
 \thanks{L. Wang and Z. Ren are with the College of Control Science and Engineering, Zhejiang University, China (Email: lei.wangzju; zhren2000@zju.edu.cn). D. Yuan is with the School of Automation, Nanjing University of Science and Technology,  China (Email: dmyuan1012@gmail.com). G. Shi is with Australia Centre for Robotics, The University of Sydney, Australia (Email: guodong.shi@sydney.edu.au). Corresponding author: D. Yuan.}

}

\maketitle

\begin{abstract}                          
 Distributed computing is fundamental to multi-agent systems, with solving distributed  linear equations as a typical example. In this paper, we study distributed solvers for network linear equations over a network   with  node-to-node  communication messages compressed as scalar values.  {Our key idea lies in a dimension compression scheme that includes a dimension-compressing vector and a data unfolding step. The compression vector applies to individual node states as an inner product to generate a real-valued message for node communication. In the unfolding step, such scalar message is then plotted along the subspace generated by the compression vector for the local computations.} We first present a compressed consensus flow that relies only on such scalarized communication, and show that linear convergence can be achieved with well excited signals for the compression vector. We then employ such a compressed consensus flow as a fundamental consensus subroutine to develop distributed continuous-time and discrete-time solvers for network linear equations, and prove their linear convergence properties under scalar node communications. With scalar communications, a direct benefit would be the reduced node-to-node communication channel  {burden} for distributed computing. Numerical examples are presented to illustrate the effectiveness of the established theoretical results.
\end{abstract}

\begin{IEEEkeywords}                          
Network linear equation; Consensus;  Compression; Scalar communication
\end{IEEEkeywords}

\vspace{-1em}
\section{Introduction}
The rapid development in large-scale intelligent systems such as smart grid, intelligent transportation, and cyber-physical systems is increasingly relying on distributed solutions to sensing, estimation, control, and signal processing of multi-agent systems \cite{magnusbook,martinez07,kar12,Rabbat2010}. Such systems are organized as interconnected networks where nodes represent agents and subsystems located at physically decentralized locations, and links represent information, data, or energy flows among the agents and subsystems. In the network, distributed computations can be achieved with agents only communicating (or interacting) with a small number of neighbors \cite{magnusbook}. 
The promise of such technology advances has been radical  improvements in scalability and efficiency for the operation of complex dynamical systems. In this paper, we are mainly interested in a particular distributed computation problem, where each agent holds a linear algebraic equation and aims to compute the solution of the network linear algebraic equations in a distributed manner, with  a wide range of engineering applications (see e.g. \cite{Pas(2012)} for power system monitoring  and \cite{Pan(2021)} for versatile LiDAR SLAM).

In the literature, there have been many investigations on  distributed solutions to  network linear equations, with the majority established on the  consensus seeking algorithm \cite{magnusbook}.
By employing the  consensus algorithm as a subroutine for distributed information aggregation, 
a number of distributed network linear equation algorithms have been established regarding the convergence to the exact/least-squares solution and convergence rates for both discrete-time and continuous-time settings under various network conditions \cite{lu09-1,lu09-2,liu13,mou13,asuman14,jadbabaie15,Morse-TAC15,hedge19,Yi(2023)}, even at finite time steps \cite{tao2020}. 
Moreover, the connection between distributed discrete algorithms and dynamical flows has also been explored \cite{elia,cotes14,brian15,shi2017networkTAC}, with the idea of using differential equations to study recursive algorithms traced back to \cite{ljung1977analysis}. 
See \cite{Wang_survey(2019)} for a thorough review of recent advances on distributed continuous-time and discrete-time solvers for network linear equations.

In these distributed computation algorithms, node communication in general must be of the same dimension as the network decision variable which may be of high dimension. Furthermore, most of the algorithms experience slower convergence as the size of the network increases, especially for networks with sparse communication links. Therefore, the communication overheads in distributed algorithms have been an important issue. 
To handle such issues, tremendous efforts have been devoted with several communication compression or quantization strategies developed for different distributed computation algorithms without critically upsetting convergence rates, such as unbiased compressor \cite{liu2021linear,Doan(2020)}, the biased  compressor \cite{Kolo(2019),Beznosikov2020Onbiased}, the standard uniform quantizer \cite{Nedic(2008),li2020}, etc.  {
Besides, encoding-transmission-decoding processes are developed in \cite{lei2020,liu2020} for  reducing the number of transmitted bits.}
Readers of interest are referred to \cite{Yi2023Communication} for a thorough survey on the developments of communication compression in distributed computation.

In this paper, we study distributed computation algorithms over networks for the exact solution of network linear equations,  where node-to-node communication messages are compressed to scalars. In the proposed flows, a dimension compressing vector is first applied to individual node states to generate a real-valued  message for node communication, and then in the local computations, the scalar message is unfolded along the subspace generated by the compressing vector.  First of all, a compressed consensus flow with directly applying the scalarized compression scheme is presented, with a rigorous proof of the necessary and sufficient condition on the compression vector for a linear convergence  in view of the classical uniform complete observability theory. With such a compressed  consensus algorithm as a basic subroutine for information aggregation,  a distributed network linear equation flow with scalarized communications and  its discrete-time version are then proposed. It is shown that with the compression vector appropriately designed to satisfy a persistent excitation condition, the proposed both continuous-time and discrete-time solvers can solve the network linear equations  {with linear convergence for an appropriately designed stepsize}. The main contribution of this paper lies in developing a new scalarized communication compression strategy and  proving that the resulting distributed solvers by directly incorporating such a compression strategy can solve the network linear equations  {with linear convergence}. Numerical examples are presented to show that the node-to-node communication burden  {to complete the computation problem} can be reduced by using the proposed scalarized compression strategy.

The idea of compressing communication was addressed in \cite{Kolo(2019)} for average consensus and network convex optimization, where the communication is compressed by employing a sparsification operator, to reduce the number of bits required for representing the communication information. In \cite{Kolo(2019)} the communication dimension is still the same as that of the node states, leading to higher communication channel  {burden} as the state dimension increases. This is different from our proposed compression approach, where the dimension of the messages being communicated is compressed into   {one scalar}, independent of the node state dimension.
 {The proposed solvers are also relevant to \cite{wang2018}, where two scalar states are communicated, while the 
 equation is disassembled in another way, requiring more nodes.}

 {
\emph{\bf Notation.}
In this paper, $\|\cdot \|$
denotes $2$-norm, i.e., $\|x\|=\sqrt{x^\top x}$ for $x\in\mathbb{R}^n$ and $\|A\|=\max_{\|x\|=1}\|Ax\|$ for $A\in\mathbb{R}^{n\times m}$. The notation $\mathbf{1}_n(\mathbf{0}_n)$, $\mathbf{I}_n$ and $\{{\bf e}_1,...,{\bf e}_m\}$ denote the column one (zero) vector, identity matrix and base vectors in $\mathbb{R}^d$, respectively. The symbol $\otimes$ denotes the Kronecker product. $A\succ B$ ($A{\succeq}B$) means $A-B$ is positive (semi)definite for any two compatible square matrices $A,B$. 
}
\vspace{-1em}
\section{Problem Definition}\label{sec2}

In this paper, we are mainly interested in solving network linear algebraic equation, i.e., each agent $i$ holds a  linear algebraic equation $
\mathcal{E}_i: \ \mathbf{H}_i^\top\mathbf{v}=b_i$
with an unknown $\mathbf{v} \in\mathbb{R}^m$, $\mathbf{H}_i \in\mathbb{R}^{m}$ and $b_i\in \mathbb{R}$.
The agents aim to solve the network linear algebraic  equation
\begin{equation}\label{eq:LAE}
\mathcal{E}:  \ \mathbf{H} \mathbf{v} =\mathbf{b}
\end{equation}
with  $\mathbf{H}= [\mathbf{H}_1 \,\dots  \, \mathbf{H}_n]^\top\in\mathbb{R}^{n\times m}$ and $\mathbf{b}=[b_1\,\dots \,b_n]^\top\in\mathbb{R}^{n}$.
For the solvability of $\mathcal{E}$, we impose the assumption below.

\begin{assumption}\label{ass-sol}
    For the linear equation $\mathcal{E}$ in \eqref{eq:LAE}, there holds $$
{\rm rank}(\mathbf{H})={\rm rank}\big([\mathbf{H}\ \mathbf{b}]\big)=m.
$$
\end{assumption} 
The above assumption implies that the network linear equation $\mathcal{E}$ has the exact unique solution $\vb^\ast=(\Hb^\top\Hb)^{-1}\Hb^\top\bb$.

 {For such a distributed computation problem \eqref{eq:LAE}, the  conventional algorithms (e.g. in \cite{shi2017networkTAC,Morse-TAC15,tao2020}) are established on the consensus algorithm.} 
During distributed computing processes, each node $i$ holds a dynamical state $\mathbf{x}_i (t)\in \mathbb{R}^m$ at time $t$,  typically of the same  dimension as the decision variable of the network-wise problem. Then these $\mathbf{x}_i(t)$ are shared over the network among nodes' neighborhoods, specified by a simple,  {  undirected, and connected}   graph $\mathrm{G}=(\mathrm{V},\mathrm{E})$. Distributed  algorithms  are  recursions of the $\mathbf{x}_i(t)$, which  depend  only on  the local dataset at node $i$  and the states that node $i$  {received}.

In a successful algorithm design,   each $\mathbf{x}_i(t)$ converges to the solution of the global problem \cite{shi2017networkTAC}.
However, when the dimension of the decision variable is high, the communication  {burden} along each node-to-node communication channel for the computation process becomes significant, as each node is sending a vector of the same dimension to all its neighbors at any given time instance.
Therefore, it is desirable for the nodes not to share the full $\mathbf{x}_i(t)$ but rather certain  scalar-valued messages, and still facilitate the distributed computation tasks. We are interested in the following problem.


\noindent{\bf Problem.}  Each node  $i$ holds  $\mathbf{x}_i (t)\in \mathbb{R}^m$ at time $t$;  shares with its neighbors  $\mathbf{y}_i(t)=\mathbf{C}^\top(t) \mathbf{x}_i(t) \in \mathbb{R}$ where the dimension compression vector $\mathbf{C}(t)\in\mathbb{R}^m$ is piecewise continuous and bounded.  Design distributed algorithms that drive each $\mathbf{x}_i(t)$ to the solutions of the  global network-wise problem $\mathcal{E}$.

Before the close of this section, we let $[a_{ij}]\in\mathbb{R}^{n\times n}$ be a weight matrix complying with the graph $\mathrm{G}$, where $a_{ij}\geq 0$ for all $i$ and $j$, and $a_{ij}>0$ if and only if $(i,j)\in\mathrm{E}$.
 Denote $\mathbf{L}_{\mathrm{G}}$ as the Laplacian of the graph $\mathrm{G}$, where $[\mathbf{L}_{\mathrm{G}}]_{ij}=-a_{ij}$ for all   $i\neq j$, and $[\mathbf{L}_{\mathrm{G}}]_{ii}=\sum_{j=1}^n a_{ij}$ for all $i\in\mathrm{V}$. As $\mathrm{G}$ is { assumed to be connected and undirected}, the eigenvalues of $\mathbf{L}_{\mathrm{G}}$, denoted by $\lambda_i$, $i=1,\ldots,n$  take the form $0=\lambda_1<\lambda_2\leq\cdots\leq\lambda_n$ by \cite{magnusbook}. 
Let $\mathbf{C}(t) \in \mathbb{R}^m$ be a piece-wise continuous dimension compression vector for the decision space, and there is no loss of generality to let $\mathbf{C}(t)$ be \emph{normalized}, i.e., $\|\Cb(t)\|=1$ for all $t$ throughout the paper. 

\vspace{-1em}
\section{Consensus with Scalar Communication}
\label{sec3}

Consensus is a fundamental algorithm that acts as a subroutine in numerous distributed computation problems, including network linear equations. 
In this section, we propose a distributed compressed consensus flow over the  graph $\mathrm{G}$, where node-to-node information exchanges are real numbers. We show that along the flow node states will reach consensus at an exponential rate under a persistent excitation (PE) condition on the dimension compression vector.

 We propose the following consensus flow
\begin{equation}
  \label{eq:CAC}
\dot{\mathbf{x}}_i = \displaystyle\sum_{j=1}^n a_{ij}    \mathbf{C}(t)  \big(\mathbf{y}_j- \mathbf{y}_i\big)\,,\
\mathbf{y}_i=\mathbf{C}^\top(t)  \mathbf{x}_i
  \,,\ i\in\mathrm{V}\,.
\end{equation}

Clearly for the implementation of the flow, each node $i$ only needs to send a scalar   $\mathbf{y}_i(t)$ to its neighbors at each time $t$. The closed-loop  flow can be written as
\begin{equation}
  \ba{rcl}\label{ccf}
\dot{\mathbf{x}}_i(t) = \sum\limits_{j=1}^n a_{ij}   \mathbf{C}(t)   \mathbf{C}^\top(t) (\mathbf{x}_j(t)- \mathbf{x}_i(t)), \ i\in\mathrm{V}
  \ea
\end{equation}
which is termed  a {\em compressed consensus flow}. { For convenience, as in \cite{liu2021linear,Beznosikov2020Onbiased,Yi2023Communication} we define the \emph{scalarized compressor} as $\Cb_0(t, \xb):=\Cb(t)\Cb^\top(t)\xb\in\mathbb{R}^m$.}
Assume that the initial time of the flow is  $t_0\geq0$, and denote the network initial value $\mathbf{x}(t_0):=\big[\mathbf{x}_1^\top(t_0)\ \dots \ \mathbf{x}_n^\top(t_0)\big]^\top$ and the average $\mathbf{x}^{\star}=\sum_{k=1}^n \mathbf{x}_k(t_0)/n$.
We introduce the following definition.
\begin{definition}\footnote{Note that  linear convergence equals to exponential convergence by observing  $\gamma^{(t-t_0)}=\mbox{exp}(-\hat\gamma(t-t_0))$ with $\hat\gamma=\ln(1/\gamma)$.}
The flow (\ref{ccf}) achieves global linear consensus (GLC) if there are constants $c>0$ and $\gamma\in(0,1)$ such that, for all $\mathbf{x}(t_0)\in\mathbb{R}^{mn}$, 
\begin{equation}\label{eq:GEAC}
\big\| \mathbf{x}(t)- \mathbf{1}_{n}\otimes\mathbf{x}^{\star}\big\|^2\leq  c\|\mathbf{x}(t_0)\|^2 \gamma^{(t-t_0)}\,,\quad t\geq t_0.
\end{equation}
\end{definition}

We impose the following PE assumption on the dimension compression vector $\mathbf{C}(t) \in \mathbb{R}^m$.

\begin{assumption}\label{ass-PE}
The compression vector $\mathbf{C}$ is PE, i.e., there exist constants $\alpha,T>0$ such that
\begin{equation}
\int_{t}^{t+T} \mathbf{C}(s)  \mathbf{C} ^\top (s)ds  {\succeq} \alpha \mathbf{I}_m\,,\qquad \forall t\geq t_0.
\end{equation}
\end{assumption}

The above PE condition  inherits from the field of  adaptive control  or linear time-varying (LTV) systems  for globally exponential stability \cite{Brian-TAC-1977}. The intuition behind Assumption \ref{ass-PE} is that the vector $\mathbf{C} $ should sample the entire state space $\mathbb{R}^m$ evenly enough so that each dimension of $\mathbf{x}_i$ is covered by sufficient excitations. 
We present the following results.

\begin{theorem}\label{thm1}
The compressed consensus flow (\ref{ccf}) achieves GLC iff the compression vector $\mathbf{C}$ satisfies Assumption \ref{ass-PE}.
\end{theorem}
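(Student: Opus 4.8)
The plan is to exploit the Kronecker structure of (\ref{ccf}) to split the $mn$-dimensional flow into $n$ decoupled $m$-dimensional gradient flows indexed by the Laplacian spectrum, and then to reduce the whole claim to a single classical persistent-excitation/exponential-stability equivalence. First I would write (\ref{ccf}) in stacked form: using $\sum_j a_{ij}(\xb_j-\xb_i)=-[\Lb_{\mG}\xb]_i$ gives
\[
\dot{\xb}=-\big(\Lb_{\mG}\otimes\Cb(t)\Cb^\top(t)\big)\xb .
\]
Since the graph is undirected ($a_{ij}=a_{ji}$), one has $\mathbf{1}_n^\top\Lb_{\mG}=\mathbf{0}$, so $\tfrac{d}{dt}\sum_i\xb_i=0$ and $\xb^\star$ is indeed the preserved average. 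Defining the disagreement $\eb:=\xb-\mathbf{1}_n\otimes\xb^\star$ and using $\Lb_{\mG}\mathbf{1}_n=\mathbf{0}$, the error obeys the same dynamics
\[
\dot{\eb}=-\big(\Lb_{\mG}\otimes\Cb(t)\Cb^\top(t)\big)\eb ,
\]
with the constraint $(\mathbf{1}_n^\top\otimes\Ib_m)\eb(t)\equiv\mathbf{0}$, and GLC is exactly exponential decay of $\eb$.

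Next I would diagonalize. Writing $\Lb_{\mG}=\Ub\,\diag(\lambda_1,\dots,\lambda_n)\,\Ub^\top$ with $\Ub$ orthogonal and first column $\mathbf{1}_n/\sqrt{n}$, the orthogonal change of variables $\zb:=(\Ub^\top\otimes\Ib_m)\eb$ preserves norm and yields the decoupled family
\[
\dot{\zb}_k=-\lambda_k\,\Cb(t)\Cb^\top(t)\,\zb_k,\qquad k=1,\dots,n .
\]
Since $\|\zb\|=\|\eb\|$, GLC is equivalent to every $\zb_k$ decaying exponentially. For $k=1$ one has $\lambda_1=0$ and $\zb_1=(\mathbf{1}_n^\top/\sqrt{n}\otimes\Ib_m)\eb\equiv\mathbf{0}$, so that mode is automatically trivial; only the modes $k=2,\dots,n$, all with $\lambda_k>0$, carry information.

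The core is then the scalar-gain gradient flow $\dot{\wb}=-\lambda\,\Cb(t)\Cb^\top(t)\,\wb$ with $\lambda>0$, for which I would establish uniform exponential stability iff $\Cb$ is PE. This is precisely the classical equivalence for identification error systems in \cite{Brian-TAC-1977}: the pair $(0,\Cb^\top(t))$ has identity transition matrix, so its observability Gramian over $[t,t+T]$ equals $\int_t^{t+T}\Cb(s)\Cb^\top(s)\,ds$, and uniform complete observability of this pair is exactly Assumption \ref{ass-PE}. For sufficiency I would take $V=\|\wb\|^2$, use $\dot V=-2\lambda\|\Cb^\top\wb\|^2\le 0$, and integrate over a window $[t,t+T]$; since $\|\dot{\wb}\|\le\lambda\|\wb\|$ (because $\|\Cb\|=1$), $\wb(s)$ stays close to $\wb(t)$ on the window, and the PE bound gives $V(\wb(t+T))\le(1-\rho)V(\wb(t))$ for a $t$-independent $\rho\in(0,1)$, i.e.\ geometric decay over windows, hence exponential convergence. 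Applying this to each $k\ge 2$ (whose Gramian is $\succeq\lambda_k\alpha\Ib_m$) and taking the slowest of the finitely many rates yields GLC. For necessity I would run the same reduction backwards: if GLC holds then in particular the mode $\dot{\zb}_2=-\lambda_2\Cb\Cb^\top\zb_2$ is uniformly exponentially stable, and the converse half of the observability equivalence forces $\Cb$ to be PE.

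The step I expect to be the main obstacle is the window-freezing estimate in the sufficiency direction: controlling the cross terms incurred when replacing $\wb(s)$ by $\wb(t)$ inside $\int_t^{t+T}\|\Cb^\top(s)\wb(s)\|^2\,ds$, so that the PE integral yields a strictly positive, $t$-uniform contraction constant $\rho$. This quantitative link between the PE lower bound $\alpha\Ib_m$ and a uniform per-window decrease is the heart of the matter; by contrast the necessity direction is comparatively mild, as it only invokes the standard converse of the observability equivalence applied to a single nonzero mode.
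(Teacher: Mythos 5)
Your proposal is correct and follows essentially the same route as the paper's proof in Appendix B: both pass to the stacked form $\dot{\xb}=-(\Lb_{\mG}\otimes\Cb(t)\Cb^\top(t))\xb$, use conservation of the average to discard the zero mode, project onto the orthogonal complement of $\spn\{\mathbf{1}_n\}$ via the Laplacian's eigenstructure, and reduce the claim to the classical PE/uniform-exponential-stability equivalence of \cite{Brian-TAC-1977} (the paper's Lemma \ref{lem1}). The only cosmetic difference is that you fully decouple into $n-1$ modes $\dot{\zb}_k=-\lambda_k\Cb\Cb^\top\zb_k$ and take the slowest rate, whereas the paper keeps the nonzero modes together as the block system $\dot{\zb}=-(\Lambda\otimes\Cb\Cb^\top)\zb$ and shows PE of the block matrix is equivalent to Assumption \ref{ass-PE} via the bounds $\lambda_2\Ib\preceq\Lambda\preceq\lambda_n\Ib$.
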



The proof of Theorem \ref{thm1} is established on the classical results for stability of LTV systems \cite{Brian-TAC-1977}.
The following result characterizes the corresponding convergence rate.

\begin{theorem}\label{thm2}
Let Assumption \ref{ass-PE} hold. Then the compressed consensus flow (\ref{ccf}) achieves GLC, i.e., satisfying \eqref{eq:GEAC}  with
\[
\ba{l}
\gamma=\Big(1-\frac{2\alpha\lambda_2}{(1+T\lambda_n)^2}\Big)^{1/T}\,,\quad
c={1}/\Big(1-\frac{2\alpha\lambda_2}{(1+T\lambda_n)^2}\Big)\,.
\ea
\]
\end{theorem}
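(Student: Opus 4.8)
The plan is to pass to stacked coordinates, decouple the flow along the eigenbasis of the Laplacian, and then reduce the whole claim to a single scalar-gain estimate over one excitation window.

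First I would rewrite \eqref{ccf} as $\dot{\xb}=-(\mathbf{L}_{\mathrm{G}}\otimes\Cb(t)\Cb^\top(t))\xb$ with $\xb=[\xb_1^\top\cdots\xb_n^\top]^\top$. Since $\mathbf{L}_{\mathrm{G}}$ is symmetric with $\mathbf{L}_{\mathrm{G}}\mathbf{1}_n=0$ and $a_{ij}=a_{ji}$, the network average $\xb^\star$ is conserved, so the error $\eb:=\xb-\mathbf{1}_n\otimes\xb^\star$ obeys the same linear flow $\dot{\eb}=-(\mathbf{L}_{\mathrm{G}}\otimes\Cb\Cb^\top)\eb$ and satisfies $\sum_i\eb_i=0$. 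Diagonalizing $\mathbf{L}_{\mathrm{G}}=U\Lambda U^\top$ with $\Lambda=\mathrm{diag}(0,\lambda_2,\dots,\lambda_n)$ and setting $\zb=(U^\top\otimes\Ib_m)\eb$ turns the flow into $n$ decoupled modes $\dot{\zb}_k=-\lambda_k\Cb\Cb^\top\zb_k$; the mode $k=1$ has $\lambda_1=0$ and $\zb_1\equiv0$ (exactly the conserved average), while $\|\eb\|^2=\sum_{k\ge2}\|\zb_k\|^2$ since $U$ is orthogonal.

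The heart of the proof is a single-mode window estimate: for $\dot{\wb}=-\lambda\Cb(t)\Cb^\top(t)\wb$ with $\lambda>0$, I claim $\tfrac12\|\wb(t+T)\|^2\le(1-\tfrac{2\lambda\alpha}{(1+\lambda T)^2})\tfrac12\|\wb(t)\|^2$. Writing $g(s)=\Cb^\top(s)\wb(s)$, the energy identity gives $\tfrac12\|\wb(t)\|^2-\tfrac12\|\wb(t+T)\|^2=\lambda\int_t^{t+T}g(s)^2\,ds$, so I must lower-bound $\int g^2$. Integrating the dynamics from $t$ to $s$ yields $\Cb^\top(s)\wb(t)=g(s)+\lambda\int_t^s\Cb^\top(s)\Cb(\tau)g(\tau)\,d\tau$; using $\|\Cb\|=1$, Cauchy--Schwarz and Minkowski's inequality over $[t,t+T]$ I would bound the $L^2$-norm of $s\mapsto\Cb^\top(s)\wb(t)$ by $(1+\lambda T)$ times that of $g$, i.e. $\int_t^{t+T}(\Cb^\top(s)\wb(t))^2\,ds\le(1+\lambda T)^2\int_t^{t+T}g^2\,ds$. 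Since the left side equals $\wb(t)^\top(\int_t^{t+T}\Cb\Cb^\top\,ds)\wb(t)\ge\alpha\|\wb(t)\|^2$ by Assumption \ref{ass-PE}, this gives $\int g^2\ge\tfrac{\alpha}{(1+\lambda T)^2}\|\wb(t)\|^2$ and hence the claimed contraction. I expect this estimate to be the main obstacle: the subtlety is relating the true output $\Cb^\top(s)\wb(s)$ to the frozen-state output $\Cb^\top(s)\wb(t)$, the only quantity to which the PE lower bound directly applies, and extracting the clean $(1+\lambda T)^2$ factor.

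Finally I would assemble the estimate. For $\lambda\in[\lambda_2,\lambda_n]$ the per-mode factor satisfies $1-\tfrac{2\lambda\alpha}{(1+\lambda T)^2}\le\rho:=1-\tfrac{2\alpha\lambda_2}{(1+T\lambda_n)^2}$ (numerator bounded below by $2\alpha\lambda_2$, denominator above by $(1+T\lambda_n)^2$); moreover the normalization $\|\Cb\|=1$ forces $\alpha\le T$, so $\rho\in(0,1)$ and the $\gamma,c$ in the definition are well posed. Summing the contraction over $k\ge2$ gives $\tfrac12\|\eb(t+T)\|^2\le\rho\,\tfrac12\|\eb(t)\|^2$ for every $t\ge t_0$. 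Iterating over $\lfloor(t-t_0)/T\rfloor$ windows and using that $\tfrac12\|\eb(\cdot)\|^2$ is non-increasing (because $\frac{d}{dt}(\tfrac12\|\eb\|^2)=-\eb^\top(\mathbf{L}_{\mathrm{G}}\otimes\Cb\Cb^\top)\eb\le0$) to absorb the leftover sub-window, I obtain $\|\eb(t)\|^2\le\rho^{-1}\gamma^{(t-t_0)}\|\eb(t_0)\|^2$ with $\gamma=\rho^{1/T}$. Since $\mathbf{1}_n\otimes\xb^\star$ is the orthogonal projection of $\xb(t_0)$ onto the consensus subspace, $\|\eb(t_0)\|\le\|\xb(t_0)\|$, which yields exactly \eqref{eq:GEAC} with $\gamma$ and $c=\rho^{-1}$ as stated.
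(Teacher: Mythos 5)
Your proof is correct, and it reaches the paper's exact constants by a partly different route. The paper (Appendices A--C) also removes the consensus direction -- it sets $\zb=(S\otimes\Ib_m)^\top\xb$ with $S$ built from the non-trivial eigenvectors of $\mathbf{L}_{\mathrm{G}}$, obtaining the single $(n-1)m$-dimensional LTV block $\dot\zb=-(\Lambda\otimes\Cb(t)\Cb^\top(t))\zb$ -- but it then stops there and invokes its Lemma~1 (a restatement of Lemma~5 of Loria--Panteley~\cite{Loria(2002)}) with $\bar\Phi=\lambda_n$ and PE level $\alpha_1=\alpha\lambda_2$, which hands over the factor $1-\frac{2\alpha\lambda_2}{(1+T\lambda_n)^2}$ per window. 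You instead (i) fully decouple into scalar-gain modes $\dot{\wb}=-\lambda_k\Cb\Cb^\top\wb$, which is legitimate because $\Lambda$ is constant diagonal so the time-varying factor $\Cb(t)\Cb^\top(t)$ acts identically on each mode, and (ii) prove the window contraction from first principles via the energy identity, variation of constants, and a Cauchy--Schwarz/Minkowski bound giving $\int_t^{t+T}(\Cb^\top(s)\wb(t))^2\,ds\le(1+\lambda T)^2\int_t^{t+T}g^2\,ds$; this is essentially a self-contained reconstruction of the cited lemma's proof, and your per-mode factor $1-\frac{2\lambda_k\alpha}{(1+\lambda_k T)^2}$ specializes exactly to the paper's bound after the uniformization $\lambda_k\ge\lambda_2$, $(1+\lambda_k T)^2\le(1+\lambda_n T)^2$. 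What each approach buys: yours is elementary and self-contained, and is implicitly sharper mode-by-mode (the uniform $\gamma,c$ of the theorem discard this); your observation that $\|\Cb(t)\|=1$ forces $\alpha\le T$, hence $\rho\in(0,1)$, is a well-posedness check the paper leaves implicit. The paper's block formulation buys modularity: the same Lemma~1 is reused for the iff characterization in Theorem~\ref{thm1} and, crucially, for the coupled $\zetab$-system in the proof of Theorem~\ref{thm-LAE-new}, where the symmetric matrix $\mathcal{A}(t)$ mixes consensus and projection terms and your eigen-decoupling would no longer be available. Your window-iteration bookkeeping (monotonicity of $\frac12\|\eb\|^2$ to absorb the sub-window, and $\|\eb(t_0)\|\le\|\xb(t_0)\|$ by orthogonal projection) matches the paper's final step $\|\xb(t)-\mathbf{1}_n\otimes\xb^\star\|\le\|\zb(t)\|$, $\|\zb(t_0)\|\le\|\xb(t_0)\|$.
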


\begin{remark}
The implementation of the algorithm (\ref{ccf}) requires to assign each agent with a common compression vector-valued function $\Cb(t)$ satisfying Assumption \ref{ass-PE}.
A simple effective strategy is to set a message transmission rule for all agents, i.e., each agent transmits only one entry of its state in order at each unit time interval $[k,k+1)$ for $k\in\mathbb{N}$. This in turn induces a trivial design of $\Cb(t)$ as $\Cb(t)=\eb_i$ with $i=1+ (k\mod m)$  for $t\in [k,k+1)$, $k\in\mathbb{N}$.  In addition to step functions, there are many other kinds of functions, e.g., trigonometric functions, that can be utilized to design the compression vector, e.g., $\Cb(t)=[\sin{t},\,\cos{t}]$.
\end{remark}

\vspace{-1em}
\section{Main Results}\label{sec5}
\label{sec4}
In this section, we apply the proposed compressed consensus algorithm to solve the  network linear  equation $\mathcal{E}$. 

\vspace{-1em}
\subsection{Continuous-Time Compressed Linear Equation Flow}

The compressed consensus flow \eqref{eq:CAC} can be combined with different distributed network linear equation solvers in \cite{elia,cotes14,brian15}. In this paper, we focus on 
the ``consensus + projection'' algorithm  \cite{shi2017networkTAC} and combine flow \eqref{eq:CAC} with it, where node communications are still kept as scalars
\begin{equation}
\ba{rcl}\label{eq:LAE_solver}
\dot{\mathbf{x}}_i &=& \displaystyle\sum_{j=1}^n a_{ij}   \mathbf{C}(t)  \big(\mathbf{y}_j- \mathbf{y}_i\big) - s\mathbf{H}_i(\mathbf{H}_i^\top\xb_i-b_i)\\
\mathbf{y}_i&=&\mathbf{C}^\top(t)  \mathbf{x}_i
\ea
\end{equation}
with a constant stepsize $s>0$, for $i\in\mathrm{V}$.  The flow \eqref{eq:LAE_solver} is fundamentally a direct incorporation of the scalarized compressor $\Cb_0(t,\xb):=\Cb(t)\Cb^\top(t)\xb$ into the ``consensus + projection'' algorithm  \cite{shi2017networkTAC}.

If we assume $\mathbf{y}_i(t)=\mathbf{x}_i(t)$ for all $i\in\mathrm{V}$ and let $s=1$ in (\ref{eq:LAE_solver}), the flow is reduced to the ``consensus + projection'' flow  studied in \cite{shi2017networkTAC}. Let $\rho_m=\frac{1}{n}\sigma_m(\Hb^\top\Hb)$ and $h_{M}=\max\{\|\Hb_1\|,\ldots,\|\Hb_n\|\}$. 
We are ready to present the following result on the convergence of the  flow (\ref{eq:LAE_solver}).
\begin{theorem}\label{thm-LAE-new}
Let Assumptions \ref{ass-sol} and \ref{ass-PE} hold.  {Then for $s>0$}, the flow (\ref{eq:LAE_solver}) solves the network linear equation $\mathcal{E}$  {with linear convergence}, i.e., 
\begin{equation}\label{eq:GEAC_LAE}
\| \mathbf{x}(t)-  \mathbf{1}_n\otimes\mathbf{v}^{\ast}\| =  \mathcal{O}\left({\gamma_{f}}^t\right),
\end{equation}
where the convergence rate 
\[
\gamma_f=\Big(1-\frac{2\bar\alpha}{(1+(\lambda_n+ 2sh_M^2)) T)^2}\Big)^{1/T}\,,
\]
with $\bar\alpha = ({\alpha'  - \sqrt{{\alpha'} ^2-4\lambda_2\alpha\rho_m Ts}})/{2}$ for $\alpha' = \lambda_2\alpha + (h_M^2+\rho_m)Ts$.
\end{theorem}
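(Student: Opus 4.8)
The plan is to recast the solver as a homogeneous linear time-varying (LTV) error flow and then prove uniform exponential stability by the dissipation-window mechanism underlying Theorems~\ref{thm1}--\ref{thm2}, the only new feature being that the dissipation matrix is now the sum of a consensus part and a projection part. First I would stack the node states as $\mathbf{x}=[\mathbf{x}_1^\top\cdots\mathbf{x}_n^\top]^\top$ and rewrite the closed loop of \eqref{eq:LAE_solver} as $\dot{\mathbf{x}}=-(\mathbf{L}_{\mathrm G}\otimes\mathbf{C}(t)\mathbf{C}^\top(t))\mathbf{x}-s\mathbf{P}\mathbf{x}+s\mathbf{r}$, where $\mathbf{P}=\blkdiag(\mathbf{H}_1\mathbf{H}_1^\top,\ldots,\mathbf{H}_n\mathbf{H}_n^\top)$ and $\mathbf{r}$ has blocks $\mathbf{H}_i b_i$; here the consensus term follows from $\sum_j a_{ij}\mathbf{C}\mathbf{C}^\top(\mathbf{x}_j-\mathbf{x}_i)=-\mathbf{C}\mathbf{C}^\top(\mathbf{L}_{\mathrm G}\mathbf{x})_i$. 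Since $\mathbf{H}_i^\top\mathbf{v}^\ast=b_i$ for every $i$ under Assumption~\ref{ass-sol}, the point $\mathbf{1}_n\otimes\mathbf{v}^\ast$ is an equilibrium, and the error $\mathbf{e}=\mathbf{x}-\mathbf{1}_n\otimes\mathbf{v}^\ast$ obeys $\dot{\mathbf{e}}=-\mathbf{M}(t)\mathbf{e}$ with $\mathbf{M}(t)=\mathbf{L}_{\mathrm G}\otimes\mathbf{C}(t)\mathbf{C}^\top(t)+s\mathbf{P}$, which is symmetric and positive semidefinite at every $t$.

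The engine is the LTV contraction lemma behind Theorem~\ref{thm2}: for $\dot{\mathbf{e}}=-\mathbf{M}(t)\mathbf{e}$ with $\mathbf{M}(t)=\mathbf{M}^\top(t)\succeq0$, $\|\mathbf{M}(t)\|\le M_{\max}$, and $\int_t^{t+T}\mathbf{M}(s)\,ds\succeq\bar\alpha\mathbf{I}$, the Lyapunov function $V=\|\mathbf{e}\|^2$ is nonincreasing and satisfies $V(t+T)\le\big(1-\tfrac{2\bar\alpha}{(1+TM_{\max})^2}\big)V(t)$, so iterating over windows of length $T$ gives exactly the contraction rate $\gamma_f$ in the statement. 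Here $\|\mathbf{M}(t)\|\le\lambda_n+sh_M^2\le\lambda_n+2sh_M^2=:M_{\max}$, using $\|\mathbf{L}_{\mathrm G}\otimes\mathbf{C}\mathbf{C}^\top\|\le\lambda_n$ (as $\|\mathbf{C}\|=1$) and $\|s\mathbf{P}\|\le sh_M^2$. Everything therefore reduces to the uniform coercivity bound $\int_t^{t+T}\mathbf{M}(s)\,ds=\mathbf{L}_{\mathrm G}\otimes\big(\int_t^{t+T}\mathbf{C}\mathbf{C}^\top\big)+sT\mathbf{P}\succeq\bar\alpha\mathbf{I}_{mn}$ in all $mn$ directions.

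To get that bound I would split $\mathbb{R}^{mn}$ into the consensus subspace $\mathcal{C}=\{\mathbf{1}_n\otimes\mathbf{w}\}$ and its orthogonal complement $\mathcal{C}^\perp$, writing $\mathbf{e}=\mathbf{e}_c+\mathbf{e}_d$. By Assumption~\ref{ass-PE}, $\int_t^{t+T}\mathbf{C}\mathbf{C}^\top\succeq\alpha\mathbf{I}_m$, so the consensus term annihilates $\mathcal{C}$ and contributes $\mathbf{e}_d^\top(\mathbf{L}_{\mathrm G}\otimes\int\mathbf{C}\mathbf{C}^\top)\mathbf{e}_d\ge\lambda_2\alpha\|\mathbf{e}_d\|^2$ on $\mathcal{C}^\perp$; meanwhile on $\mathcal{C}$ the projection term is coercive since $\mathbf{e}_c^\top\mathbf{P}\mathbf{e}_c=\mathbf{w}^\top\mathbf{H}^\top\mathbf{H}\mathbf{w}\ge\sigma_m(\mathbf{H}^\top\mathbf{H})\|\mathbf{w}\|^2=n\rho_m\|\mathbf{w}\|^2$ by Assumption~\ref{ass-sol}, whence $sT\,\mathbf{e}_c^\top\mathbf{P}\mathbf{e}_c\ge sT\rho_m\|\mathbf{e}_c\|^2$. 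The two deficient terms are thus complementary, and the sole obstruction is the indefinite cross-coupling $2sT\,\mathbf{e}_c^\top\mathbf{P}\mathbf{e}_d$. I would absorb it by a Schur-complement/completion-of-squares estimate, controlling the off-diagonal block through the positive-semidefiniteness inequality $\|\mathbf{P}_{cd}\mathbf{w}\|^2\le\|\mathbf{P}_{cc}\|\,\mathbf{w}^\top\mathbf{P}_{dd}\mathbf{w}\le h_M^2\,\mathbf{w}^\top\mathbf{P}_{dd}\mathbf{w}$ and retaining the $sT\mathbf{P}_{dd}$ term on $\mathcal{C}^\perp$. Collecting the contributions reduces $\int_t^{t+T}\mathbf{M}\succeq\bar\alpha\mathbf{I}$ to a scalar quadratic in $\bar\alpha$ with trace $\alpha'=\lambda_2\alpha+(h_M^2+\rho_m)Ts$ and product $\lambda_2\alpha\rho_m Ts$, so the admissible value is the smaller root $\bar\alpha=(\alpha'-\sqrt{{\alpha'}^2-4\lambda_2\alpha\rho_m Ts})/2$ quoted in the theorem.

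I expect the main obstacle to be precisely this coercivity estimate: each dissipation mechanism is rank-deficient on its own, so per-instant lower bounds cannot simply be added, and the indefinite cross term must be dominated without destroying positivity. Tying the $\mathbf{P}_{dd}$ contribution to the cross block via the block-PSD inequality — rather than discarding it as merely $\succeq0$ — is what produces the clean trace and determinant and hence the stated $\bar\alpha$; a cruder bound would only yield a more conservative rate. The remaining pieces — that $\mathbf{1}_n\otimes\mathbf{v}^\ast$ is the unique equilibrium, that $V$ is nonincreasing for every $s>0$ because $\mathbf{M}(t)\succeq0$, and that the windowed contraction integrates to $\mathcal{O}(\gamma_f^t)$ — are routine given Assumptions~\ref{ass-sol}--\ref{ass-PE} and the machinery of Theorem~\ref{thm2}.
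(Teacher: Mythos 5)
Your proposal is correct and is essentially the paper's own proof: the paper's orthogonal change of variables $\zetab=\begin{bmatrix}\mathbf{z}^\top & \sqrt{n}\,\tilde{\mathbf{x}}^\top\end{bmatrix}^\top$ with $\mathbf{z}=\Sb^\top\mathbf{x}$ is exactly your consensus/complement splitting of the error $\mathbf{e}$ (the paper's $\mathcal{A}(t)$ is your $\mathbf{M}(t)$ up to orthogonal similarity), and its lower bound on $\int_t^{t+T}\mathcal{A}(\tau)\,d\tau$ uses the same completion-of-squares on the projection cross term, with a parameter $\mu$ optimized to equalize the two coefficients and produce precisely your quadratic in $\bar\alpha$ with root sum $\alpha'$ and product $\lambda_2\alpha\rho_m Ts$, before invoking the same LTV lemma (Lemma \ref{lem1}) with $\|\mathcal{A}(t)\|\le\lambda_n+2sh_M^2$. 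No gap; even your observation that the direct bound $\|\mathbf{M}(t)\|\le\lambda_n+sh_M^2$ is tighter is consistent, since the stated rate uses the looser constant.
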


\vspace{-1em}
\subsection{Discrete-Time Compressed Linear Equation Dynamics}

The discrete-time version of Assumption \ref{ass-PE} is given below.

\begin{assumption}
\label{ass-PE-dis}
    There exist $\alpha_{\rm d}>0$ and $K\geq m$ such that
\begin{equation}
\sum_{s=k}^{k+K-1}  {\mathbf{C}[s]  \mathbf{C}[s]^\top}   {\succeq} \alpha_{\rm d} \mathbf{I}_m\,,\qquad \forall\, k\in\mathbb{N}.
\end{equation}
\end{assumption}

The discrete counterpart of the flow \eqref{eq:LAE_solver}, using Euler approximation, can be described as
\begin{equation}
    \label{eq:dt-DCO}
    \ba{rcl}
\mathbf{x}_i[k+1] &=& \xb_i[k] + h\sum\limits_{j\in\mathrm{V}} a_{ij}    {\mathbf{C}[k]}  \big(\mathbf{y}_j[k]
  - \mathbf{y}_i[k]\big) \\ && - s \mathbf{H}_i(\mathbf{H}_i^\top\xb_i[k]-b_i)\\
\mathbf{y}_i[k]&=&\mathbf{C}^\top[k]  \mathbf{x}_i[k].
\ea
\end{equation}
with constant stepsizes $\frac{2}{\lambda_n}>h>0$, and $s>0$. 

\begin{theorem}\label{thm-dt-DCO-new}
 Let Assumptions \ref{ass-sol} and \ref{ass-PE-dis} hold.    Then given any $\frac{2}{\lambda_n}>h>0$, there exists $s^\ast>0$ such that for all  $0<s<s^\ast$,  the discrete-time  algorithm (\ref{eq:dt-DCO}) solves the network linear equation $\mathcal{E}$  {with linear convergence, i.e., 
$$
\|\xb[k] - \mathbf{1}_n\otimes\vb^\ast\|^2 = \mathcal{O}(\gamma_d^k)\,,\quad \gamma_d\in(0,1)\,.
$$
}
\end{theorem}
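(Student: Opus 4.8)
The plan is to recast the recursion in error coordinates, exhibit it as a linear time-varying (LTV) system, and show that its exponential stability decouples into a ``fast'' consensus contraction and a ``slow'' projection contraction that interact only weakly for small $s$. First I would set $\eb[k]:=\xb[k]-\mathbf{1}_n\otimes\vb^\ast$. Since Assumption \ref{ass-sol} makes $\vb^\ast$ the exact solution, $\Hb_i^\top\vb^\ast=b_i$, so the projection term becomes $-s\Hb_i\Hb_i^\top\eb_i[k]$; and since $\Lb_{\mathrm{G}}\mathbf{1}_n=0$, the compressed consensus term annihilates the common component. Stacking the states, the error obeys the LTV recursion
\begin{equation}
\eb[k+1]=\big(\Ib-h\,\Lb_{\mathrm{G}}\otimes\Cb[k]\Cb^\top[k]-s\,\Pb\big)\eb[k],\qquad \Pb:=\blkdiag\big(\Hb_1\Hb_1^\top,\dots,\Hb_n\Hb_n^\top\big).
\end{equation}
I would then split $\eb[k]$ orthogonally into its network-average part $\bar\eb[k]=(\tfrac1n\mathbf{1}_n\mathbf{1}_n^\top\otimes\Ib)\eb[k]=\mathbf{1}_n\otimes\bar\zb[k]$ and its disagreement part $\tilde\eb[k]=\eb[k]-\bar\eb[k]$. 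Because $\mathbf{1}_n^\top\Lb_{\mathrm{G}}=0$, the consensus term acts only on $\tilde\eb$; using $\sum_i\Hb_i\Hb_i^\top=\Hb^\top\Hb$, the average obeys $\bar\zb[k+1]=(\Ib-\tfrac{s}{n}\Hb^\top\Hb)\bar\zb[k]-\tfrac{s}{n}\sum_i\Hb_i\Hb_i^\top\tilde\eb_i[k]$, a slow contraction at rate $1-s\rho_m$ (since $\Hb^\top\Hb\succ0$ has smallest eigenvalue $n\rho_m$), while the disagreement obeys $\tilde\eb[k+1]=(\Ib-h\Lb_{\mathrm{G}}\otimes\Cb[k]\Cb^\top[k])\tilde\eb[k]-s\big((\Ib-\tfrac1n\mathbf{1}_n\mathbf{1}_n^\top)\otimes\Ib\big)\Pb\,\eb[k]$. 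These two subsystems are coupled only through gains of order $s$.

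The technical heart is to prove that the \emph{unperturbed} ($s=0$) disagreement recursion contracts by a fixed factor $\rho<1$ over every window of length $K$, uniformly in $k$. Here I would exploit the normalization $\|\Cb[k]\|=1$, which gives $(\Cb[k]\Cb^\top[k])^2=\Cb[k]\Cb^\top[k]$ and hence the clean one-step identity
\begin{equation}
\|\tilde\eb[k+1]\|^2-\|\tilde\eb[k]\|^2=-h\,\tilde\eb[k]^\top\big(\Lb_{\mathrm{G}}(2\Ib-h\Lb_{\mathrm{G}})\otimes\Cb[k]\Cb^\top[k]\big)\tilde\eb[k].
\end{equation}
For $0<h<2/\lambda_n$ the matrix $\Lb_{\mathrm{G}}(2\Ib-h\Lb_{\mathrm{G}})$ is positive definite on the disagreement subspace, so $\|\tilde\eb[k]\|^2$ is nonincreasing; summing over a window and invoking the discrete PE condition (Assumption \ref{ass-PE-dis}) to show the sampled directions cannot all remain unexcited then yields the strict windowed decay $\rho<1$. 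This is the discrete counterpart of Theorems \ref{thm1}--\ref{thm2}, and I expect it to be the main obstacle: the one-step decrement is only positive semidefinite (rank one in the $\Cb[k]$ direction), so the argument must be a genuinely multistep PE/observability-Gramian estimate that controls how far $\tilde\eb[s]$ drifts from $\tilde\eb[k]$ across the window, with $\rho$ degrading to $1$ as $h\uparrow 2/\lambda_n$ yet staying strictly below $1$ for each admissible $h$.

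Finally I would close the loop for small $s$ by a composite two-timescale argument on the window map. Bounding the order-$s$ couplings and iterating one window gives estimates of the form $\|\tilde\eb[(l+1)K]\|\le \rho\|\tilde\eb[lK]\|+c_1 s\,\|\bar\zb[lK]\|$ and $\|\bar\zb[(l+1)K]\|\le(1-s\rho_m)^K\|\bar\zb[lK]\|+c_2 s\,\|\tilde\eb[lK]\|$, so the $2\times2$ nonnegative comparison matrix governing $(\|\bar\zb\|,\|\tilde\eb\|)$ at window endpoints has both diagonal entries below $1$ and off-diagonal product $\mathcal{O}(s^2)$. Its spectral radius falls below $1$ precisely when $K\rho_m(1-\rho)>c_1c_2\,s$ for all sufficiently small $s$, which pins down $s^\ast$ (together with the requirement $s<2n/\sigma_1(\Hb^\top\Hb)$ needed for the average contraction estimate). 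A Schur window map then yields $\|\eb[k]\|^2=\mathcal{O}(\gamma_d^{k})$ with $\gamma_d\in(0,1)$, establishing the claim. The only genuinely hard step is the windowed consensus contraction; the perturbation bookkeeping is routine once $\rho<1$ is in hand.
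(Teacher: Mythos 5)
Your proposal is correct in its overall architecture and, at the structural level, coincides with the paper's proof: the paper performs the same average/disagreement split (there written as $\zb=\Sb^\top\xb$ and $\tilde\xb=\frac{1}{n}\mathbb{I}^\top\xb-\vb^\ast$), derives the same coupled dynamics with $\Ab[k]=\Ib_{m(n-1)}-h\Lambda\otimes\mathcal{C}[k]$, and closes the $\mathcal{O}(s)$ coupling by routine bookkeeping — though via a single weighted Lyapunov function $V[k]=V_1[k]+pV_2[k]$ with $V_1[k]=\sum_{j=0}^{K-1}\|\mathcal{T}[k+j,k]\zb[k]\|^2$ and $p=2nKh_M^2/\rho_m$, giving a per-step decrement $V[k+1]-V[k]\leq-\beta V[k]$ and an explicit $s^\ast$, rather than your window map with a $2\times 2$ nonnegative comparison matrix; these two closures are interchangeable. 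The genuine divergence is at exactly the step you flag as the main obstacle: the windowed contraction of the unperturbed compressed consensus. You propose the classical energy-plus-drift PE argument (the one-step decrement $-\theta^\top\big((2h\Lambda-h^2\Lambda^2)\otimes\mathcal{C}[k]\big)\theta$ cannot remain small over a $K$-window without contradicting Assumption 3, once the drift of $\tilde\eb$ across the window is controlled); that route is workable — it is essentially how the continuous-time Lemma 1 is proved — but you leave it as a sketch and never extract a quantitative $\rho<1$. The paper instead obtains this step almost for free by an observability argument: it treats $\theta[k+1]=\Ab[k]\theta[k]$ with output $y[k]=\big((2h\Lambda-h^2\Lambda^2)^{1/2}\otimes\Cb^\top[k]\big)\theta[k]$ (well-defined precisely because $0<h<2/\lambda_n$ makes $2h\Lambda-h^2\Lambda^2\succ0$, consistent with your remark that the rate degenerates as $h\uparrow 2/\lambda_n$), identifies the windowed energy decrement with the observability grammian $\mathcal{G}_K[k]$, and then applies the output injection $F[k]=h\Lambda(2h\Lambda-h^2\Lambda^2)^{-1/2}\otimes\Cb[k]$, which makes the closed-loop transition matrix \emph{exactly the identity}; the auxiliary grammian is then the plain sum $(2h\Lambda-h^2\Lambda^2)\otimes\sum_{j=0}^{K-1}\mathcal{C}[k+j]\succeq\bar\alpha_d\Ib$ directly by Assumption 3, and invariance of uniform complete observability under bounded output feedback (Kailath) transfers the lower bound back to $\mathcal{G}_K[k]$, yielding $\mathcal{T}^\top[k+K,k]\,\mathcal{T}[k+K,k]\preceq(1-g)\Ib_{m(n-1)}$. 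Your route would buy a self-contained estimate with explicit constants; the paper's trick makes the "genuinely multistep" drift estimate you anticipate disappear entirely. As it stands, your write-up is incomplete at that one lemma — to be a proof you must either execute the drift estimate quantitatively or import the grammian/feedback-invariance argument.
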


\begin{remark}
 {
The compressed solvers (\ref{eq:LAE_solver}) and (\ref{eq:dt-DCO}) are established on the ``consensus + projection'' solver  \cite{shi2017networkTAC}, of which the convergence rate is determined by both quantities of consensus and projection. Intuitively, the consensus process becomes slower by adding  scalarized compressors. This thus definitely leads to a slower convergence of the compressed solvers, in contrast with the original one with no compression. Despite of this, we note that at each iteration the communication amount after using the scalarized compressor is reduced to only $\frac{1}{m}$ of the original ones. In other words, our compressed solvers (\ref{eq:LAE_solver}) and (\ref{eq:dt-DCO}) are beneficial of reducing the total communication burden of each link if their convergence rates are $m$ times faster than the original solvers, as addressed in the subsequent numerical simulations. 
}
\end{remark}

\begin{remark}
 { The implementation of the proposed scalarized compressor $\Cb_0(t,\xb):=\Cb(t)\Cb^\top(t)\xb$  is  comprised of compression and decompression processes. Specifically, for each agent, the vector $\Cb$ is used to compress the message vector into a scalar for transmission, and the received scalars are unfolded into a vector using the $\Cb$ again for decompression. This compressor $\Cb_0(\cdot)$, that can be directly combined with the ``consensus + projection'' algorithm  \cite{shi2017networkTAC}, is different from the existing ones in \cite{liu2021linear,Beznosikov2020Onbiased,Yi2023Communication}, where there is no such compression-decompression process and extra design is required to guarantee linear convergence. Moreover, due to the presence of time-varying compression vector, the convergence analysis  turns out challenging and complex, particularly for the discrete-time one, and the LTV control system tools are borrowed to construct Lyapunov functions (see Appendix E).
    }
\end{remark}

\vspace{-1em}
\section{Numerical Simulations}

In this section, numerical simulations are presented to verify the validality of the proposed algorithms. Consider a network of $n=10$ nodes over a
circle communication graph, where each edge is assigned with the same unit weight and each node holds a linear equation with some randomly generated $(\mathbf{H}_i,b_i)$, satisfying Assumption \ref{ass-sol} and having the unique solution  $\mathbf{v}^\ast= [2,1,3,4,-1]$.



We first investigate the coninuous-time solver  \eqref{eq:LAE_solver} with the compression vector  
$\mathbf{C}(t)= \mathbf{e}_i$ with $i=1 + (k \mod m) $  for $t\in [k{\Delta}t,[k+1]{\Delta}t)$, $k\in\mathbb{N}$ and ${\Delta}t=0.01$.
It is clear that  $\|\mathbf{C}(t)\|=1$ and Assumption \ref{ass-PE} is satisfied with $T=5{\Delta}t$ and $\alpha_d={\Delta}t$. 
By letting the stepsize constant $s=3,1,0.5$, we implement the proposed compressed  flow \eqref{eq:LAE_solver}, with simulation results  given, where the simulation results for the corresponding standard flows without introducing compression, i.e., $\Cb=\Ib_m$ are also given. It can be seen that both flows solve the linear equation  {with linear convergence}, and the proposed flow \eqref{eq:LAE_solver} shows a  {lower} convergence rate. Despite of this, given the desired computation accuracy $\|\mathbf{x}-1_n\otimes\mathbf{v}^\ast\|/n=10^{-2}$, we note that the time required by the compressed one \eqref{eq:LAE_solver} is strictly  less than $m=5$ times of the standard flow under all these cases. This, together with the fact that  the communication burden of the standard flow is $m=5$ times larger than the compressed one \eqref{eq:LAE_solver} at each round, implies that the proposed compressed  flow \eqref{eq:LAE_solver} has the benefit of reducing the total communication burden to solve the  computation task.

We then verify the effectiveness of the discrete-time compressed solver \eqref{eq:dt-DCO}, implemented with $h=0.2$ and $ {\mathbf{C}[k]}=\eb_i$ with $i= 1 + (k \mod m)$ for $k\in\mathbb{N}$. It is noted that $ {\mathbf{C}[k]}$ satisfies  Assumption \ref{ass-PE-dis} with $K=5$ and $\alpha_d=1$. 
By letting the stepsize constant $s=0.02,0.002,0.0005$, the corresponding simulation results are  shown. Given the desired accuracy level $10^{-2}$, under all these three cases the iteration steps required  by the compressed algorithm \eqref{eq:dt-DCO} is strictly  less than $m=5$ times of the standard algorithm, verifying the advantage of  \eqref{eq:dt-DCO} similarly in reducing the communication burden.

{ 
In the literature, there are also other types of compressors, such as Unbiased $l$-bits quantizer $\Cb_1$ \cite{liu2021linear}, Greedy (Top-k) sparsifier $\Cb_2$ \cite{Beznosikov2020Onbiased} and Standard uniform quantizer $\Cb_3$ \cite{Yi2023Communication} 
\[
\ba{c}
{\Cb}_1[\xb]=\frac{\|\xb\|_{\infty}}{2^{l-1}}{\rm sign}(\xb)*\lfloor\frac{2^{l-1}|\xb|}{\|\xb\|_{\infty}}+\overline{\omega}\rfloor,\\
{\Cb}_2[\xb]=\sum_{s=1}^{k}[\xb]_{i_s}{\rm e}_{i_s},\quad
{\Cb}_3[\xb]=\lfloor \xb+\frac{\mathbf{1}_m}{2}\rfloor\,,
\ea
\]
where $*$ denotes Hadamard product. 
For a fair comparison, we directly incorporate these compressors into the ``consensus + projection" algorithm \cite{shi2017networkTAC} to derive the corresponding compressed solvers, by using ${\Cb}_j$, $j=1,2,3$ to replace the scalarized compressor $\Cb_0$, that is $\Cb[k]\yb_i[k]$ in \eqref{eq:dt-DCO}. The resulting simulation results are presented. It can be seen that  the proposed  algorithm \eqref{eq:dt-DCO} has the advantages of maintaining linear convergence to the  solution. On the other hand, it is worth noting that some levels of computation residuals can be observed when directly applying these compressors $\Cb_1, \Cb_2, \Cb_3$, and one may remove such residuals, but with some further modifications to the algorithm, e.g., replacing the transmitted messages by state estimation errors through adding extra filter dynamics \cite{Yi2023Communication}, though it is out of scope of this paper. 
}

\section{Conclusions}\label{sec6}
We have developed both distributed continuous-time and discrete-time solvers for linear algebraic equations over a network, where the node-to-node  communication  was  scalars. The scheme contained first  a dimension compressing vector that  generates a real-valued  message for node communication as an inner product, and then a data unfolding step in the local computations where the scalar message is  unfolded  along the subspace generated by the compressing vector.  In view of such a simple idea, we proposed a consensus flow that can deliver linear convergence at a network level, while all node-to-node communications were kept as real numbers for all time. Built on such a compressed average consensus, distributed continuous-time and discrete-time solvers for linear algebraic equations were then proposed with linear convergence guarantees. In simulations, the compression vector was shown to be able to  reduce the communication complexity of the overall computation process.  {
What this work left open for future study is whether the scalarized compressor ensures the effectiveness of the algorithm when solving distributed convex and non-convex optimization problems. 
The issue of communication loss is also worth study, particularly when stochastic loss occurs. Notably, we will try to combine scalarized compressors with other techniques,  such as event-triggered methods,
in order to reduce the number of communication bits per communication while  reducing the number of communications.}

\appendix

\subsection{A useful lemma}

Let $x\in \mathbb{R}^b$ and $\Phi(\cdot)$ be a symmetric piecewise continuous and  bounded  matrix function mapping from $\mathbb{R}^{\geq 0}$ to $\mathbb{R}^{b\times b}$, and denote $\bar \Phi=\sup\limits_{t\geq t_0}\|\Phi(t)\|$. Consider  {LTV system}
\begin{equation}\label{LTV}
\dot{x}=-\Phi(t) x\,,
\end{equation}
for which we have the following result.

\begin{lemma}\label{lem1}
 The origin of system (\ref{LTV})  is  globally exponentially  stable (GES), i.e., there exist constants $k_x,\gamma_x>0$ such that, for all $x(t_0)\in\mathbb{R}^{b}$
 \[
 \|x(t)\|^2 \leq k_x\|x(t_0)\|^2 e^{-\gamma_x (t-t_0)}\,,\quad \forall t\geq t_0
 \]
iff the matrix square root of $\Phi$, i.e., $\Phi^{\frac{1}{2}}$ is PE, i.e., there exist  constants $\alpha_1,T_1>0$ such that
\begin{equation}\label{100}
\int_{t}^{t+T_1} \Phi (s) ds \geq \alpha_1 \Ib_b\,,\quad \forall t\geq t_0\,.
\end{equation}
Moreover, if (\ref{100}) holds, then there holds
\begin{equation}\label{eq:app-gammak}\ba{l}
\gamma_x=-\frac{1}{T_1}\ln\Big(1-\frac{2\alpha_1}{(1+\bar\Phi T_1)^2}\Big)\,,
k_x={1}/\Big(1-\frac{2\alpha_1}{(1+\bar\Phi T_1)^2}\Big)
\ea\,.
\end{equation}
\end{lemma}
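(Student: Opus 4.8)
The plan is to treat the two implications separately, since only one of them carries the quantitative content. Throughout I would use that $\Phi(t)\succeq 0$ (implicit in the statement, since the real square root $\Phi^{1/2}(t)$ is invoked, and it is what holds in all applications of the lemma), and rewrite the dynamics as $\dot x = -\Phi^{1/2}(t)\,\Phi^{1/2}(t)^\top x$ via $\Phi=\Phi^{1/2}(\Phi^{1/2})^\top$. In this form $\int_t^{t+T_1}\Phi^{1/2}(\Phi^{1/2})^\top ds=\int_t^{t+T_1}\Phi\,ds$, so condition \eqref{100} is exactly the standard persistency-of-excitation statement for the factor $\Phi^{1/2}$. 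The necessity direction (GES $\Rightarrow$ \eqref{100}) is then the classical equivalence between uniform exponential stability of $\dot x=-WW^\top x$ and persistency of excitation of $W$, which I would obtain directly from the LTV stability theory of \cite{Brian-TAC-1977} with $W=\Phi^{1/2}$.

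The substance is the sufficiency direction together with the explicit constants \eqref{eq:app-gammak}, which I would prove by a Lyapunov contraction over windows of length $T_1$. Take $V(x)=\|x\|^2$, so that $\dot V = -2\,x^\top\Phi x = -2\|\Phi^{1/2}(t)x\|^2\le 0$; in particular $V$ is nonincreasing and $\|x(s)\|\le\|x(t)\|$ for $s\ge t$. Fixing $t\ge t_0$ and integrating, $V(t)-V(t+T_1)=2\int_t^{t+T_1}\|\Phi^{1/2}(s)x(s)\|^2ds$, so it suffices to lower-bound this integral by $\frac{\alpha_1}{(1+\bar\Phi T_1)^2}\|x(t)\|^2$.

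The key estimate, and the main obstacle, is to convert the excitation bound on the frozen vector $x(t)$ into a bound on the true trajectory. Writing $x(t)-x(s)=\int_t^s\Phi(\tau)x(\tau)d\tau$ and applying Cauchy--Schwarz in the factored form $\Phi x=\Phi^{1/2}(\Phi^{1/2}x)$ gives $\|\Phi^{1/2}(s)\big(x(t)-x(s)\big)\|\le \bar\Phi\,T_1^{1/2}\,\big(\int_t^{t+T_1}\|\Phi^{1/2}x\|^2\big)^{1/2}$ uniformly in $s\in[t,t+T_1]$. Hence, writing $I:=(\int_t^{t+T_1}\|\Phi^{1/2}(s)x(s)\|^2ds)^{1/2}$, the triangle inequality yields $\|\Phi^{1/2}(s)x(t)\|\le\|\Phi^{1/2}(s)x(s)\|+\bar\Phi T_1^{1/2}I$. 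Squaring, integrating over the window, and using $\int_t^{t+T_1}\|\Phi^{1/2}(s)x(s)\|\,ds\le T_1^{1/2}I$ collapses all cross terms into the clean bound $\int_t^{t+T_1}\|\Phi^{1/2}(s)x(t)\|^2ds\le(1+\bar\Phi T_1)^2 I^2$. On the other hand the left-hand side equals $x(t)^\top\big(\int_t^{t+T_1}\Phi\big)x(t)\ge\alpha_1\|x(t)\|^2$ by \eqref{100}, so $I^2\ge\frac{\alpha_1}{(1+\bar\Phi T_1)^2}\|x(t)\|^2$, which is exactly what was needed.

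Combining, $V(t+T_1)\le\gamma V(t)$ with $\gamma=1-\frac{2\alpha_1}{(1+\bar\Phi T_1)^2}$; the elementary bound $(1+\bar\Phi T_1)^2\ge 4\bar\Phi T_1\ge 4\alpha_1$ (the last step from $\alpha_1 \Ib_b\preceq\int\Phi\preceq\bar\Phi T_1 \Ib_b$) forces $\gamma\in[1/2,1)$, so the window contraction is genuine. Finally I would iterate over consecutive windows: for $t=t_0+nT_1+r$ with $0\le r<T_1$, monotonicity of $V$ gives $V(t)\le\gamma^n V(t_0)$, and $n\ge (t-t_0)/T_1-1$ together with $\ln\gamma<0$ turns this into $V(t)\le\gamma^{-1}e^{(\ln\gamma/T_1)(t-t_0)}V(t_0)$, i.e.\ the claimed GES with $k_x=1/\gamma$ and $\gamma_x=-\tfrac1{T_1}\ln\gamma$ as in \eqref{eq:app-gammak}. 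I expect the only delicate point to be the window estimate above---keeping every constant sharp enough that the cross terms assemble into the exact factor $(1+\bar\Phi T_1)^2$ rather than a weaker one---while the necessity direction is inherited wholesale from \cite{Brian-TAC-1977}.
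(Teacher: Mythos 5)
Your proposal is correct and takes essentially the same route as the paper: necessity is delegated to the classical LTV equivalence of \cite{Brian-TAC-1977}, and your sufficiency argument is the window-contraction scheme of \cite[Lemma 5]{Loria(2002)} that the paper sketches, with your direct triangle-inequality estimate collapsing the cross terms to $(1+\bar\Phi T_1)^2$ in one step instead of the $\rho$-parametrized inequality optimized at $\rho=1/(\bar\Phi T_1)$. The resulting constants coincide exactly with \eqref{eq:app-gammak}, and your check that the contraction factor lies in $[1/2,1)$ is a detail the paper leaves implicit.
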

\begin{proof}
The first part of the statement on the equivalence between the GES of (\ref{LTV}) and the PE condition  \eqref{100} is clear by \cite{Brian-TAC-1977,Narendra-SIAM-1977}. Regarding the proof on the convergence rate of the  system (\ref{LTV}) under the PE condition, it follows along the same line of the proof of \cite[Lemma 5]{Loria(2002)}. For simplicity, a sketch of proof is given  below. 

By defining $V(t)={\|x(t)\|^2}/{2}$, one can follow the proof of \cite[Lemma 5]{Loria(2002)}  to show that 
\[
\frac{\alpha_1\rho}{1+\rho}\|x(t)\|^2\leq (1+\rho\bar\Phi^2T_1^2)[V(t)-V(t+T_1)]
\]
holds for all $\rho>0$, implying
\[
V(t+T_1) \leq \Big(1-\frac{2\alpha_1}{(1+\bar\Phi T_1)^2}\Big)V(t)\,
\]
by fixing $\rho=\frac{1}{\bar\Phi T_1}$.
As a result, it follows that
\[
V(t) \leq k_x V(t_0) e^{-\gamma_x(t-t_0)}\,,\quad \forall t\geq t_0
\]
with $k_x,\gamma_x$ given in \eqref{eq:app-gammak}. This, by $V(t)={\|x(t)\|^2}/{2}$, completes the proof.
\end{proof}

\subsection{Proof of Theorem \ref{thm1}}
\label{app:proof-thm1}

By collecting all agent flows (\ref{ccf}), we can write the network flow into the following compact form
\begin{equation}\label{laplacian}
\dot{\mathbf{x}}(t)  =- \Big( \mathbf{L}_{\mathrm{G}}\otimes \mathbf{C}(t)\mathbf{C}^\top(t)\Big) \mathbf{x}(t).
\end{equation}
In the following, we denote
\begin{equation}
\label{eq:Ccal}
    \mathcal{C}(t)= \mathbf{C}(t)\mathbf{C}^\top(t)\,,
\end{equation}
and employ Lemma \ref{lem1} to prove Theorem \ref{thm1}. 
The key idea lies in showing that the GLC of (\ref{laplacian}) is equivalent to the GES of a  {LTV system} in the form \eqref{LTV}, for which the PE condition \eqref{100} holds if and only if Assumption \ref{ass-PE} holds. This completes the proof by recalling the first part of Lemma \ref{lem1}.

In view of the above intuition,  we first transform the GLC of (\ref{ccf}) to the GES of a LTV system.
As $\mathbf{1}_n^\top\mathbf{L}_{\mathrm{G}}=0$, we have
$\frac{1}{n}(\mathbf{1}_n^\top\otimes \Ib_m)\dot{\mathbf{x}}(t) = 0
$ for all $t\geq t_0$,
which implies 
\begin{equation}\label{eq:appb-xbstar}\ba{l}
\xb^\star = \frac{1}{n}(\mathbf{1}_n^\top\otimes \Ib_m){\mathbf{x}}(t_0)= \frac{1}{n}(\mathbf{1}_n^\top\otimes \Ib_m){\mathbf{x}}(t)\,.
\ea\end{equation}
With this in mind, we let $S\in\mathbb{R}^{n\times(n-1)}$ be a matrix whose rows being eigenvalue vectors corresponding to nonzero eigenvalues of $\mathbf{L}_{\mathrm{G}}$,  satisfying
\begin{equation}\label{eq-S}\ba{l}
    S^\top\mathbf{1}_n =0\,,\quad  \mathbf{I}_n = SS^\top +\mathbf{1}_n\mathbf{1}_n^\top/n\,, \\ S^\top S = \mathbf{I}_{n-1}\,,\quad S^\top\mathbf{L}_{\mathrm{G}}S=\Lambda:=\diag\{\lambda_2,\ldots,\lambda_n\}\,.
\ea\end{equation}
Define $\Sb := S\otimes \mathbf{I}_m$ and 
\begin{equation}\label{eq:appc-y}
\mathbf{z}:=\Sb^\top\mathbf{x}\,,
\end{equation}
whose  time derivative  along \eqref{laplacian} is given by
\beeq{\label{ysys}\ba{l}
\dot{\mathbf{z}}(t)
=-\Big(S^\top\mathbf{L}_{\mathrm{G}}\otimes  \mathcal{C}(t)\Big) \mathbf{x}(t)\\
= -\Big(S^\top\mathbf{L}_{\mathrm{G}}(SS^\top  +\mathbf{1}_n\mathbf{1}_n^\top/n)\otimes  \mathcal{C}(t)\Big) \mathbf{x}(t)\\
= -\Big(\Lambda \otimes  \mathcal{C}(t)\Big) \mathbf{z}(t)
\ea}
where the second and third equalities are obtained by using the second equality of \eqref{eq-S} and the fact that $\mathbf{L}_{\mathrm{G}}\mathbf{1}_n=0$, respectively. As $\mathrm{G}$ is undirected and connected, by recalling \cite{magnusbook}, we have
\begin{equation}\label{eq_lambda2}
   \lambda_n \mathbf{I}_{n-1}  {\succeq}  \Lambda  {\succeq} \lambda_2 \mathbf{I}_{n-1} {\succ}0\,,
\end{equation}
with $\lambda_2, \lambda_n>0$ being the second smallest eigenvalue and the maximal eigenvalue of $\mathbf{L}_{\mathrm{G}}$, respectively. Note that the resulting $\zb$-flow is  {LTV system}  of the form \eqref{LTV} with a symmetric piecewise continuous and  bounded matrix $\Phi(t)=\Big(\Lambda \otimes  \mathcal{C}(t)\Big)$.

To this end,  we observe from \eqref{eq:appb-xbstar}, \eqref{eq-S} and \eqref{eq:appc-y} that
\begin{equation}\label{eq:appb-y}
\mathbf{x}(t)-\mathbf{1}_n\otimes\xb^\star = \Sb\mathbf{z}(t)
\end{equation}
which,  by Definition 1, proves that the flow (\ref{ccf}) achieves the GES (i.e., the inequality \eqref{eq:GEAC} holds),  \emph{if and only if} the linear time-varying system  of $\mathbf{z}(t)$ is GES at the origin.

It is further noted from Lemma \ref{lem1} that  the $\zb$-system  is GES at the origin if and only if  for some positive $\alpha_1$ and $T_1$,
\begin{equation}\label{eq:V-PE}
\int_{t}^{t+T_1} \Big(\Lambda \otimes \mathcal{C}(\tau)\Big) d\tau  {\succeq} \alpha_1\mathbf{I}_{mn}\,, \quad \forall t\geq t_0\,.
\end{equation}
Thus, the proof reduces to show that \eqref{eq:V-PE} holds \emph{if and only if} Assumption \ref{ass-PE} holds.

Note that
\[
  \int_{t}^{t+T}  \Big(\Lambda \otimes \mathcal{C}(\tau)\Big)  d\tau 
  =\Lambda \otimes\Big(\int_{t}^{t+T} \mathcal{C}(\tau) d\tau\Big)\,.
\]
By recalling \eqref{eq_lambda2}, it then can be easily verified that 
\[
   \mbox{Assumption } \ref{ass-PE} \quad \Longrightarrow\quad \int_{t}^{t+T} \psib(\tau)\psib^\top(\tau)  d\tau  {\succeq} \alpha\lambda_2 \mathbf{I}_{mn}\,,
\]
verifying \eqref{eq:V-PE}, and 
\[
\eqref{eq:V-PE} \quad \Longrightarrow\quad \int_{t}^{t+T_1} \mathbf{C}(\tau)\mathbf{C}^\top(\tau) d\tau   {\succeq} (\alpha_1/\lambda_n)\mathbf{I}_m\,,
\]
verifying Assumption \ref{ass-PE}. The proof  thus is  completed.

\subsection{Proof of Theorem \ref{thm2}}
\label{app:proof-thm2}

In what follows, we follow the arguments  in Appendix \ref{app:proof-thm1} to prove Theorem \ref{thm2}. Specifically, the proof lies in quantifying the  convergence rate of $\zb$-system \eqref{ysys} and then applying \eqref{eq:appb-y} to derive the convergence rate of the proposed flow (\ref{ccf}).

We first recall that $\dot \zb= -\Big(\Lambda\otimes  \mathcal{C}(t)\Big) \mathbf{z}$, and observe that
\[\ba{rcl}
\|\Lambda\otimes  \mathcal{C}(t)\|\leq \bar c^2 \lambda_n\,
\ea\]
with $S$ defined in \eqref{eq-S}.
Then, by recalling Lemma \ref{lem1} and using \eqref{eq:PE-bound} and $\|\mathcal{C}(t)\|=1$, we can obtain
\[
\|\mathbf{z}(t)\|^2 \leq c\|\mathbf{z}(t_0)\|^2 \gamma^{(t-t_0)}
\]
with $\gamma=\Big(1-\frac{2\alpha\lambda_2}{(1+T\lambda_n)^2}\Big)^{1/T}$, $
c={1}/\Big(1-\frac{2\alpha\lambda_2}{(1+T\lambda_n)^2}\Big)$.
The proof is thus completed by using $\|\mathbf{x}(t)-\mathbf{1}_n\otimes\xb^\star\| = \|\Sb\mathbf{z}(t)\| \leq \|\mathbf{z}(t)\|$ and $\|\mathbf{z}(t_0)\| = \|\Sb^\top\mathbf{x}(t_0)\|\leq \|\mathbf{x}(t_0)\|$.

\subsection{Proof of Theorem \ref{thm-LAE-new}}

\label{app:D}

The  network flow (\ref{eq:LAE_solver}) can be compactly described as
\begin{equation}
\dot{\mathbf{x}}(t)  = - \big( \mathbf{L}_{\mathrm{G}} \otimes \mathcal{C}(t)\big) \mathbf{x}(t) - s(\mathsf{H}\mathbf{x}(t)-\mathsf{B}),
\end{equation}
where
$\mathsf{H} = \mbox{blkdiag}\bigg({\Hb_1\Hb_1^\top},\cdots,{\Hb_n\Hb_n^\top}\bigg)
$ and $
\mathsf{B} = \big({b_1\Hb_1^\top},\cdots,{b_n\Hb_n^\top}\big)^\top.
$

Following the arguments and terminologies in Appendix \ref{app:proof-thm1}, we define
$
\mathbf{z}:=\Sb^\top\mathbf{x}\,
$
with $S$ satisfying \eqref{eq-S}, and  the error between the averaged state and the optimal solution as
\begin{equation}\label{eq-tilde_x}
\tilde{\mathbf{x}}(t) := \big(\frac{1}{n}\mathbb{I}^\top\big)\mathbf{x}(t) - \vb^\ast\,,
\end{equation}
where  $\mathbb{I}:=\mathbf{1}_n\otimes \mathbf{I}_m$ and $\vb^\ast$ is the unique solution, satisfying
\begin{equation}\label{eq-y_ast}
    \mathsf{H}\mathbb{I}\mathbf{v}^\ast=\mathsf{B}\,.
\end{equation}
This, together with \eqref{eq-tilde_x}, implies
\begin{equation}\label{eq-B}
\mathsf{H}\mathbb{I}\tilde{\mathbf{x}}(t) = \mathsf{H}\mathbb{I}\,\mathbb{I}^\top\mathbf{x}(t) - \mathsf{B}\,.
\end{equation}

By using $
\mathbf{z}=\Sb^\top\mathbf{x}$, it is clear that $\xb = \Sb\zb+ \mathbf{1}_n\otimes\tilde\xb + \mathbf{1}_n\otimes\vb^\ast$, which implies
\begin{equation}\label{eq:F2}
\|\xb(t)-\mathbf{1}_n\otimes\vb^\ast\| \leq \|\mathbf{z}(t)\| + \sqrt{n}\|\tilde{\mathbf{x}}(t)\|\,.
\end{equation}
Thus, to prove the linear convergence of $\xb(t)-\mathbf{1}_n\otimes\vb^\ast$, we  turn to study the linear or exponential convergence  of $(\mathbf{z}(t),\tilde{\mathbf{x}}(t))$.

We now proceed to derive the flows of $\mathbf{z}(t)$ and $\tilde{\mathbf{x}}(t)$, which can be described as
\begin{equation}
\label{eq:z-t}
    \ba{rcl}
    \dot \zb &=& - (\Lambda \otimes  \mathcal{C}(t)) {\mathbf{z}} - s\Sb^\top(\mathsf{H}\mathbf{x}-\mathsf{B})\\
    &=& - (\Lambda \otimes  \mathcal{C}(t)) {\mathbf{z}} - s\Sb^\top\cdot\\&&\cdot\left[\mathsf{H}\left(\Sb\Sb^\top +\mathbb{I}\,\mathbb{I}^\top/n\right)\mathbf{x}-\mathsf{B}\right]\\
    &=& - (\Lambda\otimes  \mathcal{C}(t)+s\Sb^\top\mathsf{H}\Sb) {\mathbf{z}}  - s\Sb^\top \mathsf{H}\mathbb{I}\tilde{\mathbf{x}}
    \ea
\end{equation}
where the second and third equations are obtained by using 
$\Sb\Sb^\top +\mathbb{I}\mathbb{I}^\top/n =\mathbf{I}_{nm}$ and \eqref{eq-B}, respectively,
and
\begin{equation}
\label{eq:x-t}
    \ba{rcl}
    \dot {\tilde\xb} &=& - \frac{s}{n}\mathbb{I}^\top(\mathsf{H}\mathbf{x}-\mathsf{B})\\
    &=& - \frac{s}{n}\mathbb{I}^\top\left[\mathsf{H}\left(\Sb\Sb^\top +\mathbb{I}\,\mathbb{I}^\top/n\right)\mathbf{x}-\mathsf{B}\right]\\
    &=& - \frac{s}{n}\mathbb{I}^\top\mathsf{H}\Sb\zb - \frac{s}{n}\mathbb{I}^\top
    \mathsf{H}\mathbb{I}\tilde{\mathbf{x}}.
    \ea
\end{equation}

Further, denote $\zetab = \begin{bmatrix}
    \zb\cr  \sqrt{n} \tilde\xb
\end{bmatrix}$, which yields
\begin{equation}\label{eq-sys_zetab}
    \dot\zetab = -\mathcal{A}(t)\zetab
\end{equation}
where $\mathcal{A}(t)$ is a \emph{symmetric} \emph{piece-wise continuous}  matrix of the form
\[
\mathcal{A} = 
\begin{bmatrix}
     \Lambda \otimes  \mathcal{C}(t)+s\Sb^\top\mathsf{H}\Sb &  \frac{s}{\sqrt{n}}\Sb^\top\mathsf{H}\mathbb{I}\cr
     \frac{s}{\sqrt{n}}\mathbb{I}^\top\mathsf{H}\Sb &  \frac{s}{n}\mathbb{I}^\top
    \mathsf{H}\mathbb{I}
\end{bmatrix}\,.
\]
Clearly, $\mathcal{A}(t)$ is bounded, which can be easily verified using \eqref{eq-S}, \eqref{eq_lambda2} and $\|\mathcal{C}(t)\|= 1$ to satisfy
\begin{equation}\label{eq:A-up}
    \|\mathcal{A}(t)\| \leq \lambda_n+ 2s\|\mathsf{H}\| = \lambda_n+ 2sh_M^2\,,\quad \forall t\geq t_0.
\end{equation}

Next we proceed to show that $\mathcal{A}^{{1}/{2}}$ is PE. Specifically,  let
\[\ba{l}
\Psi(t):=\int_{t}^{t+T}\mathcal{A}(\tau) d\tau \\
= 
\begin{bmatrix}
     \Lambda \otimes  \int_{t}^{t+T}\mathcal{C}(\tau)d\tau+T s\Sb^\top\mathsf{H}\Sb &  \frac{Ts}{\sqrt{n}}\Sb^\top\mathsf{H}\mathbb{I}\cr
     \frac{Ts}{\sqrt{n}}\mathbb{I}^\top\mathsf{H}\Sb &  \frac{Ts}{n}\mathbb{I}^\top
    \mathsf{H}\mathbb{I}
\end{bmatrix}
\ea\]
and show its positive-definite property by considering 
\[
\ba{l}
\begin{bmatrix}
    z\cr x
\end{bmatrix} ^\top\Psi(t) \begin{bmatrix}
    z\cr x
\end{bmatrix}
\geq \lambda_2\alpha \|z\|^2 + Ts \|\mathsf{H}^{1/2}\Sb z\|^2 \\  + 2\frac{Ts}{\sqrt{n}}z^\top\Sb^\top\mathsf{H}\mathbb{I} x + \frac{Ts}{n}\|\mathsf{H}^{1/2}\mathbb{I} x\|^2\\
\geq \lambda_2\alpha \|z\|^2 - \frac{\mu Ts}{1-\mu} \|\mathsf{H}^{1/2}\Sb z\|^2 +   \frac{\mu Ts}{n}\|\mathsf{H}^{1/2}\mathbb{I} x\|^2 \\
\geq \left(\lambda_2\alpha- \frac{\mu Ts}{1-\mu} h_M^2 \right) \|z\|^2 + \mu Ts \rho_m\|x\|^2
\ea
\]
for all $\frac{\lambda_2\alpha}{\lambda_2\alpha + Ts h_M^2}<\mu < 1$, and $z\in\mathbb{R}^{m(n-1)}$ and $x\in\mathbb{R}^{m}$. Moreover, by fixing
\[
\mu = \frac{\alpha' - \sqrt{\alpha'^2-4\lambda_2\alpha\rho_m Ts}}{2\rho_m T s}
\]
where $\alpha'=\lambda_2\alpha + (h_M^2+\rho_m)Ts$, it can be concluded that
\[
\Psi(t):=\int_{t}^{t+T}\mathcal{A}(\tau) d\tau  {\succeq} \bar\alpha \mathbf{I}_{mn}\,,\quad \forall t\geq t_0
\]
with
\[
\bar\alpha := \frac{\alpha' - \sqrt{\alpha'^2-4\lambda_2\alpha\rho_m Ts}}{2}\,.
\]

This in turn verifies that  $\mathcal{A}^{\frac{1}{2}}$ is PE, verifying the globally exponential stability of $\zetab$-system \eqref{eq-sys_zetab}, by recalling Lemma \ref{lem1} and using \eqref{eq:A-up}, with the convergence rate 
\[
\bar\gamma_f=-\frac{1}{T}\ln\Big(1-\frac{2\bar\alpha}{(1+(\lambda_n+ 2sh_M^2)) T)^2}\Big)\,.
\]
The proof is thus completed by noting $\gamma_f=e^{-\bar\gamma_f}$.

\subsection{Proof of Theorem \ref{thm-dt-DCO-new}}

\label{app:proof-thm-dt-dco-new}

In the following, we follow  terminologies of $\mathsf{H},\mathsf{B}$ in Appendix \ref{app:D}, and write
the overall discrete-time  algorithm (\ref{eq:dt-DCO}) in the following compact form as
\begin{equation}
{\mathbf{x}}[k+1]  = [\Ib_{mn} - h \big( \mathbf{L}_{\mathrm{G}} \otimes \mathcal{C}[k]\big) ]\mathbf{x}[k] - s(\mathsf{H}\mathbf{x}[k]-\mathsf{B})
\end{equation}

As a discrete-time counterpart of \eqref{eq:F2}, we have
\begin{equation}\label{eq:dt-F2}
\|\xb[k]-\mathbf{1}_n\otimes\vb^\ast\| \leq \|\mathbf{z}[k]\| + \sqrt{n}\|\tilde{\mathbf{x}}[k]\|\,.
\end{equation}
Thus, as in Appendix \ref{app:D}, we now proceed to study the discrete-time $(\mathbf{z},\tilde{\mathbf{x}})$-dynamics, which, by some straightforward calculations as in deriving \eqref{eq:z-t} and \eqref{eq:x-t}, takes the form
\begin{equation}
\label{eq:dt-z-tx}
    \ba{l}
    \zb[k+1] =  \left[\Ab[k]-s\Sb^\top\mathsf{H}\Sb\right] {\mathbf{z}}[k] - s\Sb^\top\mathsf{H}\mathbb{I}\tilde{\mathbf{x}}[k]\\
    {\tilde\xb}[k+1]= \left[\Ib_{n}- \frac{s}{n}\mathbb{I}^\top
    \mathsf{H}\mathbb{I}\right]\tilde\xb[k]- \frac{s}{n}\mathbb{I}^\top\mathsf{H}\Sb\zb[k] 
    \ea
\end{equation}
where  $\Ab[k]:= \Ib_{m(n-1)}- h\Lambda \otimes  \mathcal{C}[k]$ and $0<h<2/\lambda_n$. 

To analyze the $\zb$-subsystem in \eqref{eq:dt-z-tx}, we first consider the  system
\begin{equation}\label{eq:sys-theta}\ba{l}
    \theta[k+1] = \Ab[k] \theta[k]\\
    y[k] = \left((2h\Lambda-h^2\Lambda^2)^{1/2}\otimes\mathbf{C}^\top[k] \right)\theta[k]
\ea\end{equation}
for which the observability grammian is given by
\[
\mathcal{G}_K[k] = \sum_{j=0}^{K-1} \mathcal{T}^\top[k+j,k] \left((2h\Lambda-h^2\Lambda^2)\otimes\mathcal{C}[k] \right)\mathcal{T}[k+j,k]
\]
where $\mathcal{T}[k+j,k]$ 
the state transmission matrix, satisfying
\begin{equation}\label{eq:T_j}
    \mathcal{T}[k+j,k] = \prod_{i=0}^{j-1} \Ab[k+i]  \,.
\end{equation}
Clearly, $\|\mathcal{T}[k+j,k]\| \leq 1$ for all $0<h<2/\lambda_n$ and $j\geq0$.
Then by letting $V_\theta[k]= \|\theta[k]\|^2$, it immediately follows that
\[\ba{l}
V_\theta[k+1] - V_\theta[k] = -\theta^\top[k]\left((2h\Lambda-h^2\Lambda^2)\otimes\mathcal{C}[k] \right)\theta[k]\\
V_\theta[k+2] - V_\theta[k] = -\theta^\top[k]\left((2h\Lambda-h^2\Lambda^2)\otimes\mathcal{C}[k] \right)\theta[k]\\
-\theta^\top[k]\mathcal{T}^\top[k+1,k]\left((2h\Lambda-h^2\Lambda^2)\otimes\mathcal{C}[k] \right)\mathcal{T}[k+2,k]\theta[k]\,
\ea\]
which further implies
\[
V_\theta[k+K] - V_\theta[k] = -\theta^\top[k] \mathcal{G}_K[k] \theta[k]\,.
\]

Thus, we can conclude that the system \eqref{eq:sys-theta} is globally exponentially convergent to zero, if there exists $1>g>0$ such that
\begin{equation}\label{eq:g}
    \mathcal{G}_K[k]  {\preceq} g\Ib_{m(n-1)} \,
\end{equation}
or, equivalently
\begin{equation} \label{eq:g-2}
   \mathcal{T}^\top[k+K,k] \mathcal{T}[k+K,k]  {\preceq} (1-g)\Ib_{m(n-1)}\,. 
\end{equation}

To show \eqref{eq:g}, we then consider the following auxiliary system
\begin{equation}\label{eq:sys-theta-2}\ba{l}
    \theta'[k+1] = \Ab[k] \theta'[k] + F[k]y'[k] \\
    y'[k] = \left((2h\Lambda-h^2\Lambda^2)^{1/2}\otimes\mathbf{C}^\top[k] \right)\theta'[k]
\ea\end{equation}
with  $F[k] = h\Lambda(2h\Lambda-h^2\Lambda^2)^{-1/2}\otimes {\mathbf{C}[k]}$. For such auxiliary system,  it can be easily verified that the state transmission matrix is the identity matrix and the observability grammian is given by
\[\ba{rcl}
\mathcal{G}_K'[k] &=& \sum_{j=0}^{K-1}\left((2h\Lambda-h^2\Lambda^2)\otimes\mathcal{C}[k+j] \right)\\
&=& (2h\Lambda-h^2\Lambda^2)\otimes \sum_{j=0}^{K-1}\mathcal{C}[k+j] \\
& {\succeq}& \bar\alpha_d\Ib_{m(n-1)} {\succ}0
\ea\]
where we defined
\[
\bar\alpha_d := \alpha_d\min\{2h\lambda_2-h^2\lambda_2^2,2h\lambda_n-h^2\lambda_n^2\}\,.
\]

It is worth noting that the  auxiliary system \eqref{eq:sys-theta-2} is obtained from \eqref{eq:sys-theta} by introducing a bounded output feedback $F[k]y'[k]$. Thus, by recalling \cite[page 217]{Kailath(1980)} where the observability is shown to be invariant under bounded output feedback, we can conclude that \eqref{eq:g} and thus \eqref{eq:g-2} are proved.

With this in mind, we proceed to analyze the  $\zb$-subsystem in \eqref{eq:dt-z-tx}, and define the discrete-time Lyapunov function as
\begin{equation} \label{eq:range_v1}
   \|\zb[k]\|^2 \leq V_1[k]=\sum_{j=0}^{K-1}\|\mathcal{T}[k+j,k]\zb[k]\|^2 \leq K\|\zb[k]\|^2\,
\end{equation}
and along the $\zb$-subsystem in \eqref{eq:dt-z-tx} we define $\mathcal{T}^{k+j}_k:=\mathcal{T}[k+j,k]$, then we have
\begin{equation}\label{eq:dif_V1}
\ba{rcl}
&&V_1[k+1] - V_1[k] \\
&\leq& \sum_{j=1}^{K}\|\mathcal{T}^{k+j}_{k+1}\Ab[k]\zb[k]\|^2 - \sum_{j=0}^{K-1}\|\mathcal{T}^{k+j}_{k}\zb[k]\|^2\\ 
&& + 2\sum_{j=1}^{K}s\|\mathcal{T}^{k+j}_{k+1}\|^2\|\Ab[k]\zb[k]\Sb^\top\mathsf{H}[\Sb {\mathbf{z}}[k]+\mathbb{I}\tilde{\mathbf{x}}[k]]\| \\
&& +\sum_{j=1}^{K}s^2\|\mathcal{T}^{k+j}_{k+1}\|^2\|\Sb^\top\mathsf{H}[\Sb {\mathbf{z}}[k]+\mathbb{I}\tilde{\mathbf{x}}[k]]\|^2\\
&\leq & -g\|\zb[k]\|^2 
+2s K\|\zb[k]\mathsf{H}\| (\|\zb[k]\|+\sqrt{n}\|\tilde\xb[k]\|) \\ && +Ks^2 \|\mathsf{H}\|^2(\|\zb[k]\|^2 + 2\sqrt{n}\|\zb[k]\|\|\tilde\xb[k]\| + n\|\tilde\xb[k]\|^2)\\
&\leq& -\big(g-K(3sh_M^2+2s^2h_M^4)\big)\|\zb[k]\|^2  \\ &&+n K(sh_M^2+2s^2h_M^4)\|\tilde\xb[k]\|^2,\\
\ea\
\end{equation}
where the first inequality is obtained by substituting the upper equation of \eqref{eq:dt-z-tx},  the second is obtained by applying \eqref{eq:T_j}, \eqref{eq-S} and \eqref{eq:g-2}, and the last is obtained by  by the Young's Inequality.

For the $\tilde\xb$-subsystem in \eqref{eq:dt-z-tx}, we let $V_2[k]= \|\tilde\xb[k]\|^2$ and observe that
\begin{equation}\label{eq:dif_V2}
\ba{rcl}
&&V_2[k+1] - V_2[k] \\
&=& - 2s\tilde\xb[k]^\top\frac{1}{n}\mathbb{I}^\top\mathsf{H}\big[\mathbb{I}\tilde{\mathbf{x}}[k]+\Sb\zb[k]\big]\\ 
&& + s^2\|\frac{1}{n}\mathbb{I}^\top \mathsf{H}\big[\mathbb{I}\tilde{\mathbf{x}}[k]+\Sb\zb[k]\big]\|^2 \\
&\leq&  - 2s\rho_m \|\tilde\xb[k]\|^2 + s\frac{2h_M^2}{\sqrt{n}}\|\tilde\xb[k]\|\|\zb[k]\|\\ && +\frac{s^2h_M^4}{n}(\sqrt{n}\|\tilde\xb[k]\| + \|\zb[k]\| )^2\\
&\leq& -(s\rho_m-2s^2{h_M^4})\|\tilde\xb[k]\|^2 \\ &&+ (2s^2\frac{h_M^4}{n}+s\frac{h_M^4}{n\rho_m})\|\zb[k]\|^2,
\ea\
\end{equation}
where the last inequality is obtained by the Young's Inequality.

Thus, we let $V[k] = V_{1}[k] + p V_2[k]$ with $p=\frac{2{n}K h_M^2}{\rho_m}$, leading to
\[\ba{rcl}
&&V[k+1] - V[k] \\
&\leq& - \big(g-K(3sh_M^2+2s^2h_M^4+ \frac{4s^2h_M^6}{\rho_m} + \frac{2sh_M^6}{\rho_m^2})\big)\|\zb[k]\|^2\\ 
&& -p \Big(\frac{s\rho_m}{2} - s^2h_M^2(\rho_m+2{h_M^2})\Big)\|\tilde\xb[k]\|^2 \\
\ea\]
by using \eqref{eq:dif_V1}
 and \eqref{eq:dif_V2}.

With \eqref{eq:range_v1}, it can be easily found that for  $0< s < s^\ast$, with 
\[\ba{l}
s^\ast = \min\left\{\frac{\sqrt{(3+\frac{2h_m^4}{\rho_m^2})^2+\frac{4g}{K}(2+\frac{4 h_M^2}{\rho_m})}-(3+\frac{2h_M^4}{\rho_m^2})}{2h_M^2(2+4h_M^2/\rho_m)},\frac{\rho_m}{2h_M^2(\rho_m+2h_M^2)}\right\}
\ea\]
there holds
$
V[k+1] - V[k] \leq -\beta V[k]
$
with
\[\ba{l}
\beta := \min\left\{\Big(g-K(3sh_M^2+2s^2h_M^4+ \frac{4s^2h_M^6}{\rho_m} + \frac{2sh_M^6}{\rho_m^2})\Big),\right. \\ \left.\qquad\qquad \quad \Big(\frac{s\rho_m}{2} - s^2h_M^2(\rho_m+2{h_M^2})\Big)\right\}\,
\ea\]
 {
satisfying $\beta\in(0,1)$. 
This yields  $V[k]=\mathcal{O}({-\gamma_d^k})$, where $\gamma_d:=1-\beta$ satisfies $\gamma_d\in(0,1)$. With the definition of $V[k]$, we derive $\|\zb[k]\|^2=\mathcal{O}({-\gamma_d^k})$ and $\|\tilde\xb[k]\|^2=\mathcal{O}({-\gamma_d^k})$, which together with \eqref{eq:dt-F2} completes the proof.}


\end{document}